\newcommand{\draftversion}
\DeclareSIUnit{\belmilliwatt}{Bm}
\DeclareSIUnit{\dBm}{\deci\belmilliwatt}
\def\BibTeX{{\rm B\kern-.05em{\sc i\kern-.025em b}\kern-.08em
		T\kern-.1667em\lower.7ex\hbox{E}\kern-.125emX}}
\newif\iftag@here
\newcommand*{\taghere}[1][0pt]
{\ifmeasuring@\else
	\global\tag@heretrue
	\tikz[remember picture,overlay]{\coordinate (taghere) at (0pt,#1);}%
	\fi}
\def\place@tag{%
	\iftagsleft@
	\kern-\tagshift@
	\iftag@here
	\global\tag@herefalse
	\tikz[remember picture,overlay]%
	{\path (taghere) -| node[anchor=base]{\rlap{\boxz@}} (0pt,0pt);}%
	\else
	\if1\shift@tag\row@\relax
	\rlap{\vbox{%
			\normalbaselines
			\boxz@
			\vbox to\lineht@{}%
			\raise@tag
	}}%
	\else
	\rlap{\boxz@}%
	\fi
	\kern\displaywidth@
	\fi
	\else
	\kern-\tagshift@
	\iftag@here
	\global\tag@herefalse
	\tikz[remember picture,overlay]%
	{\path  (taghere) -|  node[anchor=base]{\llap{\boxz@}} (0pt,0pt);}%
	\else
	\if1\shift@tag\row@\relax
	\llap{\vtop{%
			\raise@tag
			\normalbaselines
			\setbox\@ne\null
			\dp\@ne\lineht@
			\box\@ne
			\boxz@
	}}%
	\else \llap{\boxz@}%
	\fi
	\fi
	\fi
}
\DeclareMathOperator*{\argmin}{arg\,min}
\DeclareMathOperator*{\minimize}{minimize}
\DeclareMathOperator*{\subjectto}{subject\,to}
\newacronym{swipt}{SWIPT}{simultaneous wireless information and power transfer}
\newacronym{wpcn}{WPCN}{wireless powered communication network}
\newacronym{bs}{BS}{base station}
\newacronym{wpt}{WPT}{wireless power transfer}
\newacronym{wit}{WIT}{wireless information transfer}
\newacronym{iot}{IoT}{Internet-of-Things}
\newacronym{awgn}{AWGN}{additive white Gaussian noise}
\newacronym{tx}{TX}{transmitter}
\newacronym{ir}{IR}{information receiver}
\newacronym{eh}{EH}{energy harvesting}
\newacronym{ap}{AP}{average power}
\newacronym{pp}{PP}{peak power}
\newacronym{zf}{ZF}{zero forcing}
\newacronym{mrc}{MRC}{maximum ratio combining}
\newacronym{sinr}{SINR}{signal-to-interference-plus-noise ratio}
\newacronym{snr}{SNR}{signal-to-noise ratio}
\newacronym{tdd}{TDD}{time-division duplex}
\newacronym{siso}{SISO}{single-input single-output}
\newacronym{mimo}{MIMO}{multiple-input multiple-output}
\newacronym{miso}{MISO}{multiple-input single-output}
\newacronym{simo}{SIMO}{single-input multiple-output}
\newacronym{rf}{RF}{radio frequency}
\newacronym{dc}{DC}{direct current}
\newacronym{ac}{AC}{alternative current}
\newacronym{papr}{PAPR}{peak-to-average power ratio}
\newacronym{lp}{LPF}{low-pass filter}
\newacronym{mc}{MC}{matching circuit}
\newacronym{mrt}{MRT}{maximum ratio transmission}
\newacronym{rv}{RV}{random variable}
\newacronym{iid}{i.i.d.}{independent and identically distributed}
\newacronym{pdf}{pdf}{probability density function}
\newacronym{dnn}{DNN}{dense neural networks}
\newacronym{mdp}{MDP}{Markov decision process}
\newacronym{sca}{SCA}{successive convex approximation}
\newacronym{sdr}{SDR}{semi-definite relaxation}
\newacronym{spr}{LP}{low power}
\newacronym{mpr}{MP}{medium power}
\newacronym{lpr}{HP}{high power}
\DeclareMathOperator{\rank}{rank}
\newcommand{\Tr}[1]{\text{Tr}\{#1\} }
\begin{document}
	\title{Optimal Resource Allocation and Beamforming for Two-User MISO WPCNs for a Non-linear Circuit-Based EH Model}
	
	\author{\IEEEauthorblockN{Nikita Shanin\IEEEauthorrefmark{1}, Moritz Garkisch\IEEEauthorrefmark{1}, Amelie Hagelauer\IEEEauthorrefmark{2}\IEEEauthorrefmark{3}, Robert Schober\IEEEauthorrefmark{1}, and Laura Cottatellucci\IEEEauthorrefmark{1}} \\
		\IEEEauthorblockA{\IEEEauthorrefmark{1}\textit{Friedrich-Alexander-Universit\"{a}t Erlangen-N\"{u}rnberg (FAU), Germany} \\ { \IEEEauthorrefmark{2}\textit{Fraunhofer EMFT, Germany} \\ \IEEEauthorrefmark{3}\textit{Technische Universität München, Germany} }	}}
	
\maketitle

\newtheorem{proposition}{Proposition}	
\newtheorem{lemma}{Lemma}	
\newtheorem{theorem}{Theorem}	
\newtheorem{corollary}{Corollary}
\newtheorem{assumption}{Assumption}	
\newtheorem{remark}{Remark}	
	
\begin{abstract}
	We study two-user multiple-input single-output (MISO) wireless powered communication networks (WPCNs), where the user devices are equipped with non-linear energy harvesting (EH) circuits. 
We consider time-division duplex (TDD) transmission, where the users harvest power from the signal received in the downlink phase, and then, utilize this harvested power for information transmission in the uplink phase. 
In contrast to existing works, we adopt a non-linear model of the harvested power based on a precise analysis of the employed EH circuit. 
We jointly optimize the beamforming vectors in the downlink and the time allocated for downlink and uplink transmission to minimize the average transmit power in the downlink under per-user data rate constraints in the uplink.
We provide conditions for the feasibility of the resource allocation problem and the existence of a trivial solution, respectively.
For the case where the resource allocation has a non-trivial solution, we show that it is optimal to employ no more than three beamforming vectors for power transfer in the downlink.
To determine these beamforming vectors, we develop an iterative algorithm based on semi-definite relaxation (SDR) and successive convex approximation (SCA).
Our simulation results reveal that the proposed resource allocation scheme outperforms two baseline schemes based on linear and sigmoidal EH models, respectively.
\let\thefootnote\relax\footnotetext{The work was supported in part by German Research Foundation under Project SFB 1483. }
\setcounter{footnote}{0}
\end{abstract}
\setcounter{footnote}{0}
 
	\section{Introduction}
	\label{Section:Introduction}	
	The growth of the number of low-power Internet-of-Things (IoT) devices has fuelled a significant interest in wireless powered communication networks (WPCNs) that enable energy-sustainable communication with such devices \cite{Bi2016, Clerckx2019, Liu2014, Boshkovska2017a, Zawawi2019}.
A typical \gls*{wpcn} comprises a \gls*{bs} that broadcasts a \gls*{rf} signal to the user devices in the downlink \cite{Bi2016, Clerckx2019}. 
The users are equipped with \gls*{eh} circuits that extract the received \gls*{rf} energy. The collected energy is then used for information transmission in the uplink \cite{Bi2016, Clerckx2019}.

A WPCN employing multiple antennas at the \gls*{bs} was considered in \cite{Liu2014}. 
The authors optimized the beamforming vectors and transmit powers for maximization of the total throughput in the uplink.
Although the results in \cite{Liu2014} provide insights for WPCN design, they are based on a linear EH model, whereas the experiments in \cite{Kim2020} and \cite{Sakai2021} showed that typical EH circuits are non-linear.
In order to take the non-linearities of EH circuits into account, the authors in \cite{Boshkovska2015} proposed a practical \gls*{eh} model based on a parametrized sigmoidal function of the average power of the received RF signal.
This model was utilized for the analysis of multi-antenna WPCNs in \cite{Boshkovska2017a}, where the authors optimized the covariance matrix of the transmit symbol vectors in the downlink for maximization of the sum information rate in the uplink.
Since the EH model in \cite{Boshkovska2015} characterizes only the \emph{average} harvested power at the user devices, the WPCN design in \cite{Boshkovska2017a} does not fully capture the non-linearity of the EH circuits.
Therefore, based on an accurate analysis of a typical EH circuit, the authors in \cite{Morsi2019} developed a precise EH model that maps the \emph{instantaneous} received RF power to the \emph{instantaneous} harvested power.
Based on this model, the authors in \cite{Shanin2021c} considered multi-user multi-antenna \gls*{wpt} systems and showed that the optimal transmit strategy that maximizes a weighted sum of the average harvested powers at EH nodes employs multiple beamforming vectors at the BS.

The main contributions of this paper can be summarized as follows.
We consider a two-user \gls*{miso} WPCN, where the EH at the user devices is characterized by the non-linear circuit-based EH model developed in \cite{Morsi2019}.
We formulate an optimization problem to minimize the average transmit power in the downlink under per-user rate constraints in the uplink.
We provide conditions for the feasibility of the optimization problem and the existence of a trivial solution, respectively.
Next, when the problem is feasible and has a non-trivial solution, we show that three beamforming vectors are sufficient for optimal transmission in the downlink.
To determine these beamforming vectors, we design an iterative algorithm based on \gls*{sdr} and \gls*{sca} \cite{Sun2017}.
Our simulation results demonstrate that the proposed approach significantly outperforms two baseline schemes that are based on the linear and sigmoidal EH models in \cite{Liu2014} and \cite{Boshkovska2017a}, respectively.
\begin{figure}[!t]
	\includegraphics[width=\linewidth, draft=false]{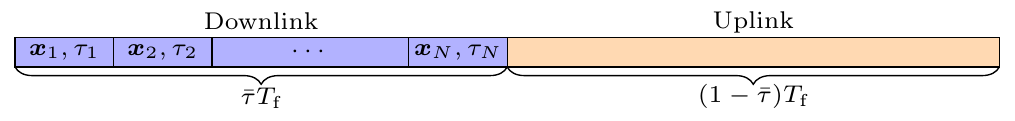}
	\caption{Structure of a time frame of length $T_\text{f}$.}
	\label{Fig:IllustrationTimeSharing}
\end{figure}

Throughout this paper, we use the following notations.
Bold upper case letters $\boldsymbol{X}$ represent matrices.
Bold lower case letters $\boldsymbol{x}$ stand for vectors and ${x}_{i}$ is the $i^\text{th}$ element of $\boldsymbol{x}$.
$\boldsymbol{X}^H$ denotes the Hermitian transpose of matrix $\boldsymbol{X}$.
$\| {\boldsymbol{x}} \|_2$ represents the L$_2$-norm of $\boldsymbol{x}$.
The sets of complex and real numbers are denoted by $\mathbb{C}$ and $\mathbb{R}$, respectively.
The real part of a complex number is denoted by $\mathcal{R}\{\cdot\}$.
$\boldsymbol{I}_K$ stands for the square identity matrix of size $K$.
The imaginary unit is denoted by $j$.
	
	\section{System Model}
	\label{Section:SystemModel}
		We consider a \gls*{miso} \gls*{wpcn}, where $K = 2$ single-antenna users are equipped with \gls*{eh} circuits \cite{Morsi2019} and $N_\text{t}\geq K$ antennas are employed at the BS.
We adopt \gls*{tdd} transmission and assume that each time frame of length $T_\text{f}$ is divided into two subframes, as shown in Fig.~\ref{Fig:IllustrationTimeSharing}. 
In the first subframe of length $\bar{\tau} T_\text{f}$, $\bar{\tau} \in [0,1)$, the BS transmits a power-carrying \gls*{rf} signal to the user devices, which harvest the received power.
In the subsequent subframe of length $(1-\bar{\tau}) T_\text{f}$, this harvested power is utilized for information transmission in the uplink.
We assume that the channels between the BS and the user devices are constant for the duration of a time frame.
Furthermore, we denote the channel between the BS and user $k\in\{1,2\}$ by $\boldsymbol{h}_k \in \mathbb{C}^{N_\text{t}}$ and assume that it is perfectly known at the BS.

		\subsection{Downlink Phase}
		In the downlink, the BS broadcasts a pulse-modulated RF signal, whose equivalent complex baseband representation is modelled as $\boldsymbol{x}(t) = \sum_{n=1}^{N} \boldsymbol{x}_n \psi_n(t)$, where $\psi_n(t) = \Pi\big(\frac{ t/T_\text{f} - \sum_{k=0}^{n-1}\tau_{k} }{ \tau_{n} } \big)$ is the transmit pulse, $\tau_0 = 0$,  $\Pi(t)$ is a rectangular function that takes value $1$ if $t \in [0,1)$ and $0$, otherwise, $\boldsymbol{x}_n \in \mathbb{C}^{N_\text{t}}$ is the symbol vector transmitted in time slot $n\in\{1,2, \cdots, N\}$, and $N$ is the number of employed symbol vectors, see Fig.~\ref{Fig:IllustrationTimeSharing}.
Here, $\tau_n$ is the portion of the time frame of length $T_\text{f}$ utilized for the transmission of symbol vector $\boldsymbol{x}_n$ with $\sum_{n=1}^{N} \tau_n = \bar{\tau}$.
Thus, the \gls*{rf} signal received at user $k\in\{1,2\}$ is given by $z^{\text{RF}}_k(t) = \sqrt{2} \mathcal{R} \Big\{ \boldsymbol{h}_k^H \boldsymbol{x}(t) \exp(j 2 \pi f_c t)  \Big\}$, where $f_c$ denotes the carrier frequency.
Similar to \cite{Morsi2019}, the noise received at the users is neglected since its contribution to the harvested power is negligible.

To harvest power, the users are equipped with identical non-linear \gls*{eh} circuits.
We denote the power harvested at user $k$ in time slot $n$ by $\rho_{k,n}$ and, as in \cite{Morsi2019} and \cite{Shanin2021a}, model it by a non-linear monotonic non-decreasing function $\phi(\cdot)$ of the instantaneous received power. 
Thus, $\rho_{k,n} = \phi\big(|z_{k,n}|^2\big)$ with  $z_{k,n} = \boldsymbol{h}_k^H \boldsymbol{x}_n$ and $\phi(\cdot)$ is given by \cite{Morsi2019}:
\begin{align}
	\phi\big(|z|^2\big) &= \min \big\{ \varphi(|z|^2), \varphi(A_s^2) \big\}, \label{Eqn:ModelEH} \\
	\varphi(|z|^2) &= \lambda \Big[\mu^{-1} W_0\Big( \mu \exp(\mu) I_0 \Big( \nu \sqrt{2 |z|^2} \Big) - 1 \Big)^2\Big],
\end{align}
\noindent where $W_0(\cdot)$ is the principle branch of the Lambert-W function, $I_0(\cdot)$ is the modified Bessel function of the first kind and order zero, and $\lambda, \mu$, and $\nu$ are parameters of the EH circuit that depend on the circuit elements but not on the received signal.
Since practical EH circuits are driven into saturation for large input powers \cite{Boshkovska2015, Morsi2019}, $\phi(\cdot)$ in (\ref{Eqn:ModelEH}) is bounded, i.e., $\phi(|z|^2) \leq \phi(|A_s|^2), \; \forall z \in \mathbb{C}$, where $A_s$ is the minimum input signal magnitude level at which the output power starts to saturate.
Hence, the average power harvested at user $k\in\{1,2\}$ in the downlink can be expressed as follows:
\begin{equation}
	{p}^\text{d}_k =\sum_{n=1}^{N} \tau_n \phi\big(|\boldsymbol{h}_k^H \boldsymbol{x}_n|^2\big)
	\label{Eqn:AverageHarvPower}
\end{equation}
Finally, we assume that user $k$ is equipped with a rechargeable built-in battery having initial energy $q_k$, which is known to the BS.
Thus, at the end of the downlink phase, the amount of energy available at user $k$ is given by $E_k = q_k + p^\text{d}_k T_f$.
		\label{Section:DownlinkStage}
		\subsection{Uplink Phase}
		In the uplink, user $k\in\{1,2\}$ transmits information symbols $s_k$ with zero mean and unit variance to the BS utilizing a portion of the available energy $E_k$.
Assuming uplink-downlink reciprocity of the channel, the symbol vector $\boldsymbol{r} \in \mathbb{C}^{N_\text{t}}$ received at the BS in the scheduled time slot is given by
\begin{equation}
	\boldsymbol{r} = \sum_{k=1}^{K} \boldsymbol{h}_k \sqrt{p^\text{u}_k} s_k + \boldsymbol{n},
\end{equation}
where $p^\text{u}_k$ is the power utilized by user $k$ for information transmission and $\boldsymbol{n} \in \mathbb{C}^{N_\text{t}}$ is an \gls*{awgn} vector with zero mean and covariance matrix $\sigma^2 \boldsymbol{I}_{N_\text{t}}$.
Since we have $N_\text{t} \geq K$, we can adopt \gls*{zf} equalization to suppress the inter-user interference at the BS.
Hence, the detected information symbol $\hat{s}_k$ of user $k$ can be expressed as follows:
\begin{equation}
	\hat{s}_k = \boldsymbol{f}_{\hspace*{-2pt}k} \boldsymbol{r} = \sqrt{p^\text{u}_k} s_k + \tilde{n}_k,
\end{equation}
\noindent where $\tilde{n}_k = \boldsymbol{f}_{\hspace*{-2pt}k} \boldsymbol{n}$ is the equivalent \gls*{awgn} with variance $\tilde{\sigma}^2_k = \|\boldsymbol{f}_{\hspace*{-2pt}k}\|^2_2\sigma^2$ for user $k$ at the BS.
Here, equalization vector $\boldsymbol{f}_{\hspace*{-2pt}k} \in \mathbb{C}^{N_\text{t}}$ is the $k^\text{th}$ row of matrix $\boldsymbol{F} = (\boldsymbol{H}^H \boldsymbol{H})^{-1} \boldsymbol{H}^H$ with $\boldsymbol{H} = [\boldsymbol{h}_1 \; \boldsymbol{h}_2]$.
Finally, the data rate of user $k$ is given by $ R_k = (1-\bar{\tau}) \log_2(1 + \Gamma_k)$, where $\Gamma_k = { p^\text{u}_k}/{\tilde{\sigma}^2 }$ is the \gls*{snr}.
		\label{Section:UplinkStage}
		
	\section{Problem Formulation and Solution}
	In this section, we develop a resource allocation algorithm for the considered MISO WPCN.
	\subsection{Problem Formulation}
		We formulate the following non-convex optimization problem:
\begin{subequations}
	\begin{align}
		\minimize_{\substack{\boldsymbol{\tau}, \bar{\tau} \in [0,1), \\ \boldsymbol{X}, \boldsymbol{p}^\text{u}, N}  } \; &P_{\text{DL}} \label{Eqn:ProblemObj} \\
		\subjectto \; & R_k \geq R^{\text{req}}_k, \forall k \label{Eqn:ProblemC1} \\
		& (1-\bar{\tau})p^\text{u}_k T_f \leq E_k, \forall k  \label{Eqn:ProblemC2} \\
		&\sum_{n=1}^{N} \tau_n = \bar{\tau}, \label{Eqn:ProblemC3}
	\end{align}
	\label{Eqn:OriginalProblem}
\end{subequations}
\noindent where we jointly optimize $\bar{\tau}$, $\boldsymbol{\tau} = [\tau_1, \tau_2 \cdots \tau_N]$, the uplink transmit powers $\boldsymbol{p}^u = [p^u_1, p^u_2]$, symbol vectors $\boldsymbol{x}_n$ collected in $\boldsymbol{X} = [\boldsymbol{x}_1 \; \boldsymbol{x}_2 \cdots \boldsymbol{x}_N]$, and the number of time slots $N$.
In (\ref{Eqn:OriginalProblem}), we minimize the average transmit power in the downlink $P_{\text{DL}} = \sum_n \tau_n \|\boldsymbol{x}_n\|_2^2$ under per-user rate constraints in the uplink in (\ref{Eqn:ProblemC1}), where $R_k^\text{req}$ is the minimum required rate of user $k$.
Here, (\ref{Eqn:ProblemC2}) and (\ref{Eqn:ProblemC3}) ensure that the transmit energy consumed by user $k$ in the uplink does not exceed $E_k$ and the obtained resource allocation is feasible, respectively.
We note that in contrast to the WPCN design in \cite{Liu2014} and \cite{Boshkovska2017a}, where covariance matrix $\tilde{\boldsymbol{X}} = \sum_{n=1}^{N} \frac{\tau_n}{\bar{\tau}} \boldsymbol{x}_n \boldsymbol{x}_n^H$ was optimized, the EH model in (\ref{Eqn:ModelEH}) characterizes the instantaneous harvested power and enables the optimization of individual transmit symbol vectors $\boldsymbol{x}_n$ in (\ref{Eqn:OriginalProblem}).
	\subsection{Characterization of Optimal Solution}
		In the following propositions, we provide a feasibility condition and characterize the optimal solution of (\ref{Eqn:OriginalProblem}).
\begin{proposition}
	Problem (\ref{Eqn:OriginalProblem}) is feasible if and only if $\exists \bar{\tau} \in [0,1)$, such that the following condition holds $\forall k \in\{1,2\}$\text{\upshape :}
	\begin{equation}
		f_k(\bar{\tau}) \triangleq \frac{1-\bar{\tau}}{\bar{\tau}} \Big[2^\frac{R^{\text{req}}_k}{1-\bar{\tau}} - 1\Big]{\tilde{\sigma}^2} - \frac{q_k}{T_f \bar{\tau}} \leq \phi(A_s^2). 
	\end{equation}
\end{proposition}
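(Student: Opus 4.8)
The plan is to fix the downlink time fraction $\bar\tau$, collapse the two per-user constraints \eqref{Eqn:ProblemC1} and \eqref{Eqn:ProblemC2} together with \eqref{Eqn:ProblemC3} into a single inequality on the normalized harvested power $p^\text{d}_k/\bar\tau$, and then show that this inequality is solvable over the remaining variables exactly when $f_k(\bar\tau)\le\phi(A_s^2)$; feasibility of \eqref{Eqn:OriginalProblem} is then equivalent to the existence of such a $\bar\tau$. Since $f_k$ is defined with $\bar\tau$ in a denominator, I treat $\bar\tau\in(0,1)$ throughout (at $\bar\tau=0$ there is no harvested energy and $f_k$ is undefined). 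Using $R_k=(1-\bar\tau)\log_2(1+\Gamma_k)$ with $\Gamma_k=p^\text{u}_k/\tilde\sigma^2$ and $\bar\tau<1$, the rate constraint \eqref{Eqn:ProblemC1} is equivalent to the lower bound $p^\text{u}_k\ge\tilde\sigma^2\big(2^{R^{\text{req}}_k/(1-\bar\tau)}-1\big)$. The left-hand side of the energy constraint \eqref{Eqn:ProblemC2} is increasing in $p^\text{u}_k$, so \eqref{Eqn:ProblemC1} and \eqref{Eqn:ProblemC2} are jointly satisfiable for user $k$ iff they hold for the smallest admissible $p^\text{u}_k$; substituting $E_k=q_k+p^\text{d}_k T_f$ and dividing by $T_f\bar\tau>0$, this reduces to $f_k(\bar\tau)\le p^\text{d}_k/\bar\tau$, with $p^\text{d}_k$ given by \eqref{Eqn:AverageHarvPower}.

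For necessity, by \eqref{Eqn:AverageHarvPower} and \eqref{Eqn:ProblemC3} the quantity $p^\text{d}_k/\bar\tau=\sum_{n=1}^N(\tau_n/\bar\tau)\,\phi\big(|\boldsymbol h_k^H\boldsymbol x_n|^2\big)$ is a convex combination of the values $\phi(|\boldsymbol h_k^H\boldsymbol x_n|^2)$, and each of these is at most $\phi(A_s^2)$ because $\phi$ in \eqref{Eqn:ModelEH} is bounded by $\phi(A_s^2)$. Hence $p^\text{d}_k/\bar\tau\le\phi(A_s^2)$ for every admissible choice of $\{\boldsymbol x_n\}$ and $\{\tau_n\}$, so $f_k(\bar\tau)\le\phi(A_s^2)$ is necessary. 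A feasible point of \eqref{Eqn:OriginalProblem} has a single value of $\bar\tau$ at which both users' constraints are met, so the inequality holds for $k=1$ and $k=2$ at that common $\bar\tau$, which is the stated condition.

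For sufficiency, suppose $\bar\tau\in(0,1)$ satisfies $f_k(\bar\tau)\le\phi(A_s^2)$ for $k=1,2$; I construct a feasible point of \eqref{Eqn:OriginalProblem}. Take $N=1$ and $\tau_1=\bar\tau$, pick $\boldsymbol v\in\mathbb{C}^{N_\text{t}}$ that is orthogonal to neither $\boldsymbol h_1$ nor $\boldsymbol h_2$ (such a vector exists since $N_\text{t}\ge 2$), set $\boldsymbol x_1=c\boldsymbol v$ with $c$ large enough that $|\boldsymbol h_k^H\boldsymbol x_1|\ge A_s$ for $k=1,2$, and set $p^\text{u}_k=\tilde\sigma^2\big(2^{R^{\text{req}}_k/(1-\bar\tau)}-1\big)$. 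Since $\phi$ is monotone non-decreasing and bounded above by $\phi(A_s^2)$, it equals $\phi(A_s^2)$ for every argument no smaller than $A_s^2$; hence both EH circuits saturate and $p^\text{d}_k=\bar\tau\,\phi(A_s^2)$ for $k=1,2$. Retracing the reduction above, \eqref{Eqn:ProblemC2} for user $k$ becomes exactly $f_k(\bar\tau)\le\phi(A_s^2)$, which holds by assumption; \eqref{Eqn:ProblemC1} holds with equality by the choice of $p^\text{u}_k$; and \eqref{Eqn:ProblemC3} holds by construction. Thus \eqref{Eqn:OriginalProblem} is feasible, and combining both directions over $\bar\tau\in[0,1)$ proves the claim. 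I expect the sufficiency step to be the only delicate point: it requires that a single high-power beamforming vector simultaneously drives \emph{both} EH circuits into saturation, which is possible precisely because $\phi$ is flat beyond $A_s^2$ and \eqref{Eqn:OriginalProblem} imposes no budget on the downlink transmit power, so there is no trade-off between the two users in reaching the common upper bound $\phi(A_s^2)$.
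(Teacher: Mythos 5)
Your proof is correct and follows essentially the same route as the paper: for fixed $\bar\tau$ the constraints reduce to requiring the normalized harvested power $p^\text{d}_k/\bar\tau \ge f_k(\bar\tau)$, and since $p^\text{d}_k/\bar\tau$ is a convex combination of values of $\phi$ bounded by $\phi(A_s^2)$, feasibility for that $\bar\tau$ is equivalent to $f_k(\bar\tau)\le\phi(A_s^2)$ for both users. Your explicit single-beam saturating construction for sufficiency is a welcome addition, as the paper only asserts this direction implicitly via the boundedness of $\phi$, and the restriction to $\bar\tau\in(0,1)$ is harmless because $f_k(\bar\tau)\to-\infty$ as $\bar\tau\to 0^+$ whenever $\bar\tau=0$ would otherwise be the relevant point.
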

\begin{proof}
	Please refer to Appendix~\ref{Appendix:ProofProposition1}.
\end{proof}

\begin{proposition}
	If the following conditions hold\text{\upshape :}
	\begin{equation}
		\big[2^{R^{\text{\upshape{req}}}_k} - 1\big]{\tilde{\sigma}^2} < \frac{q_k}{T_f}, \forall k \in \{1,2\},
		\label{Eqn:PropositionCondition}
		\ifdefined\draftversion\else\vspace*{-10pt}\fi	
	\end{equation}
	\noindent problem (\ref{Eqn:OriginalProblem}) is feasible and the optimal solution is trivial\footnotemark\text{\upshape:} $\bar{\tau}^* = 0, N^* = 0$, ${\tau}^* = 0$, $\boldsymbol{x}^* = \boldsymbol{0}$, ${p}^{\text{u}^*}_k = \big[2^{R^{\text{\upshape{req}}}_k} - 1\big]{\tilde{\sigma}^2}, \forall k \in \{1,2\}$.
\end{proposition}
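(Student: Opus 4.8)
The plan is to exhibit the claimed trivial point, show it is feasible under hypothesis (\ref{Eqn:PropositionCondition}), and observe that it attains the smallest value the objective can possibly take, so that it is optimal. The starting observation is that, since $\tau_n \geq 0$ and $\|\boldsymbol{x}_n\|_2^2 \geq 0$, we have $P_{\text{DL}} = \sum_{n=1}^{N} \tau_n \|\boldsymbol{x}_n\|_2^2 \geq 0$ for every admissible choice of the variables; hence $0$ is a global lower bound on the objective of (\ref{Eqn:OriginalProblem}), and it suffices to produce one feasible point at which $P_{\text{DL}} = 0$. I would then take $\bar{\tau} = 0$ together with $N = 0$ (so there are no downlink symbol vectors and no time portions, i.e.\ $\boldsymbol{\tau}$ and $\boldsymbol{X}$ are empty), which makes the empty sum defining $P_{\text{DL}}$ equal to zero, and set $p^{\text{u}}_k = [2^{R^{\text{req}}_k} - 1]\tilde{\sigma}^2$ for $k \in \{1,2\}$.

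Next I would check that this point lies in the feasible set. Constraint (\ref{Eqn:ProblemC3}) holds trivially because with $N = 0$ and $\bar{\tau} = 0$ both sides equal $0$. Since there is no downlink transmission, the average harvested power in (\ref{Eqn:AverageHarvPower}) is $p^\text{d}_k = 0$, so the battery energy at the end of the downlink phase is $E_k = q_k$. The uplink phase has full length $(1-\bar{\tau})T_f = T_f$, the resulting SNR is $\Gamma_k = p^{\text{u}}_k / \tilde{\sigma}^2 = 2^{R^{\text{req}}_k} - 1$, and therefore $R_k = (1-\bar{\tau})\log_2(1+\Gamma_k) = \log_2 2^{R^{\text{req}}_k} = R^{\text{req}}_k$, so the rate constraint (\ref{Eqn:ProblemC1}) is met with equality. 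Finally, the energy constraint (\ref{Eqn:ProblemC2}) reduces to $p^{\text{u}}_k T_f = [2^{R^{\text{req}}_k} - 1]\tilde{\sigma}^2 T_f \leq q_k = E_k$, which is exactly hypothesis (\ref{Eqn:PropositionCondition}) (and in fact holds strictly). Thus the constructed point is feasible, so (\ref{Eqn:OriginalProblem}) is feasible.

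To finish, I would combine the two facts: the constructed point is feasible and achieves the global lower bound $P_{\text{DL}} = 0$, hence it is an optimal solution of (\ref{Eqn:OriginalProblem}), which is the assertion of the proposition (with optimal value $P_{\text{DL}}^* = 0$). I would also remark that for any feasible point with $\bar{\tau} = 0$ the rate constraint forces $p^\text{u}_k \geq [2^{R^{\text{req}}_k} - 1]\tilde{\sigma}^2$, so the stated $p^{\text{u}}_k$ is the minimal uplink power consistent with the rate requirement, and (\ref{Eqn:PropositionCondition}) is precisely the condition guaranteeing that this minimal power can be supplied by the initial battery charge $q_k$ alone, making any wireless power transfer superfluous. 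I do not expect a genuine obstacle in this argument; the only point requiring a little care is the bookkeeping for the degenerate case $N = 0$, where the sums over time slots in $P_{\text{DL}}$ and in (\ref{Eqn:ProblemC3}) are empty and hence zero, so that $\bar{\tau} = 0$ is consistent with all constraints.
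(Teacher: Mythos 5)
Your proof is correct, but it takes a genuinely different and more elementary route than the paper. The paper's proof works through the machinery shared with Propositions~1 and~3: it reformulates (\ref{Eqn:OriginalProblem}) as an inner problem (\ref{Eqn:ProblemInterval}) parametrized by $\bar{\tau}$, shows that under condition (\ref{Eqn:PropositionCondition}) the required-power function $f_k(\bar{\tau}) = g_k(\bar{\tau})/\bar{\tau}$ is monotonically increasing with $g_k(0)<0$, deduces that the feasible sets are nested ($\mathcal{F}_2 \subset \mathcal{F}_1$ for $\bar{\tau}_2 \geq \bar{\tau}_1$) so that the value function $P^*_{\text{DL}}(\bar{\tau})$ is monotonically increasing, and concludes $\bar{\tau}^*=0$. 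You instead observe that $P_{\text{DL}} = \sum_n \tau_n \|\boldsymbol{x}_n\|_2^2 \geq 0$ is a global lower bound and exhibit a feasible point attaining it, which immediately certifies optimality. Your argument is shorter, avoids the derivative computation that the paper only sketches (``it can be shown that $f_k'(\bar{\tau})>0$''), and sidesteps a small imprecision in the paper's reasoning (nested feasible sets alone yield $P^*_{\text{DL}}(\bar{\tau}_2) \geq P^*_{\text{DL}}(\bar{\tau}_1)$, not the strict inequality claimed). What the paper's longer route buys is structural information reused elsewhere: the monotonic behaviour of $f_k$ and of $P^*_{\text{DL}}(\cdot)$ underpins the interval characterization $[\bar{\tau}^{\text{min}}, \bar{\tau}^{\text{max}}]$ in Proposition~3, whereas your lower-bound argument is specific to the case where the optimum is zero. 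Your closing remark correctly accounts for the footnoted non-uniqueness by identifying the stated $p^{\text{u}^*}_k$ as the minimal uplink power compatible with the rate constraint at $\bar{\tau}=0$.
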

\begin{proof}
	Please refer to Appendix~\ref{Appendix:ProofProposition2}.
\end{proof}
\begin{proposition}
	If problem (\ref{Eqn:OriginalProblem}) is feasible and conditions (\ref{Eqn:PropositionCondition}) do not hold, the optimal number of transmit symbol vectors is $N^* \leq 3$ and $\bar{\tau}^* \in [\bar{\tau}^\text{\upshape{min}}, \bar{\tau}^\text{\upshape{max}}]$ with $\bar{\tau}^\text{\upshape{max}} = \max\{\bar{\tau}^\text{\upshape{max}}_1,  \bar{\tau}^\text{\upshape{max}}_2\}$ and $\bar{\tau}^\text{\upshape{min}} = \max\{\bar{\tau}^\text{\upshape{min}}_1, \bar{\tau}^\text{\upshape{min}}_2\}$.
	Here, for $k \in \{1,2\}$, $\bar{\tau}^\text{\upshape{max}}_k$ is the solution of the following equation\text{\upshape:}
	\begin{equation}
		2^\frac{R^{\text{\upshape{req}}}_k}{1-\bar{\tau}_k^\text{\upshape{max}}} \ln2 R^{\text{\upshape{req}}}_k  {\tilde{\sigma}^2} =   f_k(\bar{\tau}_k^\text{\upshape{max}}) + \frac{q_k}{T_f}
		\ifdefined\draftversion\else\vspace*{-10pt}\fi	
	\end{equation}
	and $\bar{\tau}^\text{\upshape{min}}_k = \min \{ \bar{\tau}: f(\bar{\tau}) = \phi(A_s^2) \}$.
\end{proposition}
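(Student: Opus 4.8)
The plan is to treat the two assertions separately: the bound $N^*\le 3$ via a Carathéodory‑type dimension count, and the interval for $\bar\tau^*$ via a structural analysis of the scalar functions $f_k$ combined with a rescaling argument in $\bar\tau$. For $N^*\le 3$, I would consider the continuous map $T\colon\mathbb{C}^{N_\text{t}}\to\mathbb{R}^3$, $T(\boldsymbol{x})=\big(\|\boldsymbol{x}\|_2^2,\ \phi(|\boldsymbol{h}_1^H\boldsymbol{x}|^2),\ \phi(|\boldsymbol{h}_2^H\boldsymbol{x}|^2)\big)$; its image $\mathcal{S}$ is connected because $\mathbb{C}^{N_\text{t}}$ is connected and $\phi$ in (\ref{Eqn:ModelEH}) is continuous. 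For any feasible solution of (\ref{Eqn:OriginalProblem}) the triple $(P_\text{DL},p_1^\text{d},p_2^\text{d})$ assembled from (\ref{Eqn:ProblemObj}) and (\ref{Eqn:AverageHarvPower}) equals $\bar\tau$ times the convex combination $\sum_n(\tau_n/\bar\tau)\,T(\boldsymbol{x}_n)\in\mathrm{conv}(\mathcal{S})$. By the Fenchel--Eggleston sharpening of Carathéodory's theorem, a point of the convex hull of a connected subset of $\mathbb{R}^3$ is a convex combination of at most three of its points, so I can rewrite it as $\sum_{m=1}^{M}\alpha_m T(\tilde{\boldsymbol{x}}_m)$ with $M\le3$ and $\sum_m\alpha_m=1$; setting $\tilde\tau_m=\alpha_m\bar\tau$ then yields a solution with unchanged $\bar\tau$, $\boldsymbol{p}^\text{u}$, objective $P_\text{DL}$, and harvested powers $p_k^\text{d}$, hence still satisfying (\ref{Eqn:ProblemC1})--(\ref{Eqn:ProblemC3}). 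This solution is optimal and uses at most three symbol vectors.

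For the bounds on $\bar\tau^*$, I would first note that at an optimum the rate constraints (\ref{Eqn:ProblemC1}) hold with equality, since lowering $p_k^\text{u}$ to meet them only relaxes (\ref{Eqn:ProblemC2}) and does not change $P_\text{DL}$; hence $p_k^\text{u}=\big(2^{R_k^\text{req}/(1-\bar\tau^*)}-1\big)\tilde\sigma^2$, and combining this with (\ref{Eqn:ProblemC2}) and the always‑valid bound $p_k^\text{d}=\sum_n\tau_n\phi(|\boldsymbol{h}_k^H\boldsymbol{x}_n|^2)\le\bar\tau\,\phi(A_s^2)$ forces $f_k(\bar\tau^*)\le\phi(A_s^2)$ and $p_k^\text{d}\ge\bar\tau^* f_k(\bar\tau^*)$ for both $k$. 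Next I would differentiate $f_k$: a short computation shows that $f_k'(\bar\tau)$ has the sign of $g_k(\bar\tau):=\tfrac{\bar\tau R_k^\text{req}}{1-\bar\tau}2^{R_k^\text{req}/(1-\bar\tau)}\ln 2\,\tilde\sigma^2-\big(2^{R_k^\text{req}/(1-\bar\tau)}-1\big)\tilde\sigma^2+\tfrac{q_k}{T_f}$, that $g_k$ is strictly increasing on $(0,1)$ with $g_k(0^+)\le 0$ (by failure of (\ref{Eqn:PropositionCondition})) and $g_k(\bar\tau)\to\infty$ as $\bar\tau\to1^-$, and that $g_k(\bar\tau)=0$ is exactly the fixed‑point equation in the statement; hence $f_k$ is strictly decreasing and then strictly increasing, with unique minimiser $\bar\tau_k^\text{max}$. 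Moreover, since (\ref{Eqn:PropositionCondition}) fails, $(1-\bar\tau)p_k^\text{u}(\bar\tau)$ is strictly increasing and stays above $q_k/T_f$ on $(0,1)$, so $f_k>0$ there, with $f_k(\bar\tau)\to\infty$ at both ends; feasibility of (\ref{Eqn:OriginalProblem}) gives $\min_{\bar\tau}f_k\le\phi(A_s^2)$, so $\{\bar\tau:f_k(\bar\tau)\le\phi(A_s^2)\}$ is a closed subinterval of $(0,1)$ whose left endpoint is $\bar\tau_k^\text{min}=\min\{\bar\tau:f_k(\bar\tau)=\phi(A_s^2)\}\le\bar\tau_k^\text{max}$.

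The lower bound is then immediate: $f_k(\bar\tau^*)\le\phi(A_s^2)$ together with $f_k$ being decreasing below $\bar\tau_k^\text{max}$ gives $\bar\tau^*\ge\bar\tau_k^\text{min}$ for both $k$, hence $\bar\tau^*\ge\bar\tau^\text{min}$. For the upper bound I would argue by contradiction: suppose some optimal $\bar\tau^*>\bar\tau^\text{max}=\max\{\bar\tau_1^\text{max},\bar\tau_2^\text{max}\}$, and rescale all weights of its symbol vectors by $\bar\tau^\text{max}/\bar\tau^*<1$, keeping the vectors unchanged. This produces a solution at $\bar\tau=\bar\tau^\text{max}$ whose objective and all $p_k^\text{d}$ equal $\tfrac{\bar\tau^\text{max}}{\bar\tau^*}$ times their optimal values, so $p_k^\text{d}\ge\bar\tau^\text{max}f_k(\bar\tau^*)\ge\bar\tau^\text{max}f_k(\bar\tau^\text{max})$ because $f_k$ is non‑decreasing on $[\bar\tau_k^\text{max},1)$ and both $\bar\tau^\text{max}$ and $\bar\tau^*$ lie there; taking $p_k^\text{u}=\big(2^{R_k^\text{req}/(1-\bar\tau^\text{max})}-1\big)\tilde\sigma^2$ (no larger than at $\bar\tau^*$) keeps (\ref{Eqn:ProblemC1})--(\ref{Eqn:ProblemC2}) satisfied. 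Since $f_k>0$, the original $P_\text{DL}$ is strictly positive, so the rescaled solution is strictly better, contradicting optimality; hence $\bar\tau^*\le\bar\tau^\text{max}$.

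The main obstacle is the structural lemma for $f_k$ — establishing that the fixed‑point equation has a unique root in $(0,1)$ which is the global minimiser and that $f_k>0$ on $(0,1)$ — which rests on the sign analysis of $g_k$ and $g_k'$; the Carathéodory step and the rescaling step are comparatively short. Some additional care is required with the coupling of the two users through the outer $\max$ in $\bar\tau^\text{min}$ and $\bar\tau^\text{max}$, and with the borderline case of equality in (\ref{Eqn:PropositionCondition}).
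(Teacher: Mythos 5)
Your proof is correct, and it diverges from the paper's in one substantive way. For the bound $N^*\leq 3$, the paper fixes $\bar{\tau}$, $\boldsymbol{X}$, and $N$, observes that the residual problem in $\boldsymbol{\beta}$ is a linear program with $N$ variables and $3$ constraints, and invokes the fact that an optimal basic (vertex) solution has at most $3$ nonzero components; you instead map each symbol vector to the triple $\big(\|\boldsymbol{x}\|_2^2,\phi(|\boldsymbol{h}_1^H\boldsymbol{x}|^2),\phi(|\boldsymbol{h}_2^H\boldsymbol{x}|^2)\big)$ and apply the Fenchel--Eggleston refinement of Carath\'eodory's theorem to the connected image set in $\mathbb{R}^3$. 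Both are valid and give the same count; your version avoids the ``for the optimal $\boldsymbol{X},N$'' conditioning and works directly on any feasible point, at the price of invoking a less elementary convexity theorem. For the interval $[\bar{\tau}^\text{min},\bar{\tau}^\text{max}]$, the two arguments are essentially the same in substance: the paper asserts $f_k''>0$ (convexity of $f_k$) and monotonicity of $P_\text{DL}^*$ on $[\bar{\tau}^\text{max},1]$ without detail, whereas you establish the weaker but sufficient property of unimodality by showing that $\bar{\tau}^2 f_k'(\bar{\tau})$ is strictly increasing with a nonpositive limit at $0^+$ (this is exactly where failure of (\ref{Eqn:PropositionCondition}) enters), identify its unique zero with the equation defining $\bar{\tau}_k^\text{max}$, and replace the bare monotonicity claim for $P_\text{DL}^*$ by an explicit time-rescaling argument; you also supply the positivity $f_k>0$ that the paper only states. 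The only points deserving a remark are notational and boundary ones: your auxiliary $g_k$ clashes with the paper's $g_k(\bar{\tau})=\bar{\tau}f_k(\bar{\tau})$, and the degenerate case of equality in (\ref{Eqn:PropositionCondition}) (where $\bar{\tau}_k^\text{max}$ degenerates to $0$) should be stated explicitly rather than deferred, but neither affects correctness.
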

\begin{proof}
	Please refer to Appendix~\ref{Appendix:ProofProposition3}.
\end{proof}

Proposition~1 shows that problem (\ref{Eqn:OriginalProblem}) is feasible if and only if the power required for uplink transmission with rate $R^\text{req}_k$ is not greater than $\phi(A_s^2)$ for all $k \in \{1,2\}$.
Next, Proposition~2 reveals that no power transfer in the downlink is needed if the available power $q_k$ is sufficient for transmission at the data rate required for each user.
Finally, as shown in Proposition~3, if the problem is feasible and the solution is non-trivial, the optimal number of time slots satisfies $N^* \leq 3$ and the optimal length of the downlink subframe $\bar{\tau}^* \in [\bar{\tau}^\text{\upshape{min}}, \bar{\tau}^\text{\upshape{max}}]$.
In the following, we consider the case, where problem (\ref{Eqn:OriginalProblem}) is feasible and has a non-trivial solution.
\footnotetext{We note that the trivial solution of problem (\ref{Eqn:OriginalProblem}) is not unique. Here, we provide the energy efficient solution with the minimum feasible ${p}^{\text{u}}_k, \forall k$.}

We note that the functions that appear in (\ref{Eqn:OriginalProblem}) are invariant with respect to a scalar phase rotation of the symbol vectors in $\boldsymbol{X}$.
Hence, for the solution of (\ref{Eqn:OriginalProblem}), the transmit symbol vectors can be decomposed as $\boldsymbol{x}_n = \boldsymbol{w}_{n} {d}_n$, $n \in \{1,2,3\}$, where $\boldsymbol{w}_{n}$ are beamforming vectors rotated by unit-norm symbols ${d}_n = \exp(j \theta_n)$ with arbitrary phases $\theta_n$.
	\subsection{Reformulation of Optimization Problem}
		Next, we determine the optimal fraction $\bar{\tau}^*$ using a one-dimensional grid search as $\bar{\tau}^* = \argmin_{\bar{\tau} \in[\bar{\tau}^\text{\upshape{min}}, \bar{\tau}^\text{\upshape{max}}]} P_\text{DL}^*(\bar{\tau})$, where
\begin{equation}
	P_\text{DL}^*(\bar{\tau}) = \min_{\mathcal{F}} \{P_{\text{DL}} \}
	\label{Eqn:ProblemRef1}
\end{equation}
with $\mathcal{F} = \{\bar{\boldsymbol{W}}, \boldsymbol{p}^\text{u}, \boldsymbol{\tau}:\text{(\ref{Eqn:ProblemC1})}, \text{(\ref{Eqn:ProblemC2})}, \text{(\ref{Eqn:ProblemC3})} \}$ and $\bar{\boldsymbol{W}}= [\boldsymbol{w}_1 \; \boldsymbol{w}_2 \; \boldsymbol{w}_3 ]$.
In order to solve (\ref{Eqn:ProblemRef1}), we equivalently reformulate constraints (\ref{Eqn:ProblemC1}) and (\ref{Eqn:ProblemC2}) as follows:
\begin{align}
	p^\text{u}_k &\geq \rho_k^{\text{req}}, \; \forall k, \label{Eqn:ProblemC1Ref} \\
	p^\text{u}_k &\leq \bar{q}_k/T_\text{f} + p^\text{d}_k \bar{\bar{\tau}}, \; \forall k, \label{Eqn:ProblemC2Ref}
\end{align}
\noindent respectively, where $\rho_k^{\text{req}} = (2^{\gamma^{\text{req}}_k} - 1) \tilde{\sigma}^2_k$, $\bar{q}_k = \frac{q_k}{(1-\bar{\tau})}$, and $\bar{\bar{\tau}} = (1-\bar{\tau})^{-1}$.
Here, $\gamma^{\text{req}}_k = {\frac{R_k^\text{req}}{1-\bar{\tau}}}$ is the equivalent rate required by user $k$.
Thus, we can equivalently reformulate optimization problem (\ref{Eqn:ProblemRef1}) as follows:
\begin{subequations}
	\begin{align}
		\minimize_{\substack{\boldsymbol{\beta}, \boldsymbol{p}^\text{u}, \boldsymbol{W}_{\hspace*{-2pt}1}, \\ \boldsymbol{W}_{\hspace*{-2pt}2}, \boldsymbol{W}_{\hspace*{-2pt}3}} } & \; \bar{{\tau}} \sum_{n=1}^{3} \beta_n \Tr{\boldsymbol{W}_{\hspace*{-2pt}n}}
		\label{Eqn:ProblemRef1Obj} \\
		\subjectto \; &\;\;	\rank\{\boldsymbol{W}_{\hspace*{-2pt}n}\}\leq1, \boldsymbol{W}_{\hspace*{-2pt}n} \in \mathcal{S}_{+}, \forall n \label{Eqn:ProblemRef1C1} \\
		&\;\; \text{(\ref{Eqn:ProblemC3})}, \text{(\ref{Eqn:ProblemC1Ref})}, \text{(\ref{Eqn:ProblemC2Ref})} \nonumber
	\end{align}
	\label{Eqn:ProblemRef2}
\end{subequations}
\noindent\hspace*{-3pt}where  $\boldsymbol{\beta} = [\beta_1, \beta_2, \beta_3]^\top$, $\beta_n = \frac{\tau_n}{\bar{\tau}} \geq 0$, $\boldsymbol{W}_{\hspace*{-2pt}n} = \boldsymbol{w}_{\hspace*{-2pt}n} \boldsymbol{w}_{\hspace*{-2pt}n}^H, n\in\{1,2,3\}$, and $\mathcal{S}_{+} \subset \mathbb{C}^{N_\text{t} \times N_\text{t}}$ denotes the set of positive semidefinite matrices.
Here, the harvested power is given by $p^\text{d}_k = \bar{\tau} \sum_{n=1}^{3} \beta_n \phi(\boldsymbol{h}_k^H \boldsymbol{W}_{\hspace*{-2pt}n} \boldsymbol{h}_k)$.
Problem (\ref{Eqn:ProblemRef2}) is still non-convex due to the non-convexity of constraints (\ref{Eqn:ProblemC2Ref}) and (\ref{Eqn:ProblemRef1C1}).
		\subsection{Suboptimal Iterative Solution}
		In the following, we design an iterative algorithm to solve (\ref{Eqn:ProblemRef2}).
First, we drop the rank-1 constraint in (\ref{Eqn:ProblemRef1C1}).
Next, we note that the EH circuits of the user devices are not driven into saturation if
\begin{equation}
	\boldsymbol{h}^H_k \boldsymbol{W}_{\hspace*{-2pt}n} \boldsymbol{h}_k \leq A_s^2, \; \forall n, k.
	\label{Eqn:SaturationConstr}
\end{equation}
Furthermore, if condition (\ref{Eqn:SaturationConstr}) holds, function $\phi(\cdot)$ is convex and, thus, problem (\ref{Eqn:ProblemRef2}) can be efficiently solved via SCA \cite{Sun2017}.
Therefore, in the following, we assume that the EH circuits are not driven into saturation\footnotemark, i.e., condition (\ref{Eqn:SaturationConstr}) holds.
\footnotetext{Alternatively, as in \cite{Shanin2021c}, one can adopt an exhaustive search exhibiting high computational complexity to identify the users that are driven into saturation.}
Thus, in iteration $t$ of the algorithm, we linearize convex function $\phi(\cdot)$ in (\ref{Eqn:ProblemC2Ref}) as follows \cite{Sun2017}:
\begin{equation}
	\begin{aligned}
		\beta_n \phi(\boldsymbol{h}^H_k \boldsymbol{W}_{\hspace*{-2pt}n} \boldsymbol{h}_k) & \geq \phi'(\boldsymbol{h}^H_k \boldsymbol{W}^{(t)}_{\hspace{-2pt}n} \boldsymbol{h}_k) \Tr{\boldsymbol{H}_k \boldsymbol{V}_{\hspace*{-2pt}n}} + \\ & \beta_n \hat{\psi}(\boldsymbol{W}^{(t)}_{\hspace{-2pt}n}) \triangleq \Psi_k(\beta_n, \boldsymbol{V}_{\hspace*{-2pt}n}; \boldsymbol{W}^{(t)}_{\hspace{-2pt}n}),
	\end{aligned}
\end{equation}
\noindent where $\hat{\psi}(\boldsymbol{W}^{(t)}_{\hspace{-2pt}n}) = \phi(\boldsymbol{h}^H_k \boldsymbol{W}^{(t)}_{\hspace{-2pt}n} \boldsymbol{h}_k) - \phi'(\boldsymbol{h}^H_k \boldsymbol{W}^{(t)}_{\hspace{-2pt}n} \boldsymbol{h}_k) \Tr{\boldsymbol{H}_k \boldsymbol{W}^{(t)}_{\hspace{-2pt}n}}$, $\boldsymbol{V}_{\hspace*{-2pt}n} = \beta_n \boldsymbol{W}_{\hspace*{-2pt}n}$, $\boldsymbol{W}^{(t)}_{\hspace{-2pt}n} = {\boldsymbol{V}^{(t)}_{\hspace*{-2pt}n}}/{\beta^{(t)}_n}$, $\boldsymbol{H}_k = \boldsymbol{h}_k \boldsymbol{h}^H_k$, and $\beta^{(t)}_n$ and $\boldsymbol{W}^{(t)}_{\hspace{-2pt}n}$ are the values of $\beta_n$ and matrix $\boldsymbol{W}_{\hspace*{-2pt}n}$ obtained in iteration $t-1$ of the algorithm, respectively.
Thus, we reformulate constraint (\ref{Eqn:ProblemC2Ref}) as follows:
\begin{align}
	p^\text{u}_k - \bar{\bar{\tau}} \bar{\tau} \sum_{n=1}^{3} \Psi_k(\beta_n, \boldsymbol{V}_{\hspace*{-2pt}n}; \boldsymbol{W}^{(t)}_{\hspace{-2pt}n}) \leq \bar{q}_k/T_\text{f}
	\label{ProblemC2Ref2}
\end{align}
Finally, in iteration $t$ of the algorithm, we solve the following optimization problem:
\begin{subequations}
	\begin{align}
		\minimize_{\substack{\boldsymbol{V}_{\hspace*{-2pt}1}, \boldsymbol{V}_{\hspace*{-2pt}2}, \boldsymbol{V}_{\hspace*{-2pt}3} \\ \boldsymbol{\beta}, \boldsymbol{p}^\text{u}}}  \quad&\; \bar{{\tau}} \sum_{n=1}^{3} \Tr{\boldsymbol{V}_{\hspace*{-2pt}n}} 
		\label{Eqn:AlgorithmProblemObj} \\
		\subjectto \quad&\; \boldsymbol{V}_{\hspace*{-2pt}n} \in \mathcal{S}_{+}, \; \forall n
		\label{Eqn:AlgorithmProblemC3} \\
		&\;\text{(\ref{Eqn:ProblemC3})}, \text{(\ref{Eqn:ProblemC1Ref})}, \text{(\ref{Eqn:SaturationConstr})}, \text{(\ref{ProblemC2Ref2})} \nonumber
	\end{align}
	\label{Eqn:AlgorithmProblem}
\end{subequations}
\noindent Optimization problem (\ref{Eqn:AlgorithmProblem}) is convex and, hence, can be solved via a convex solver, such as CVX \cite{Grant2015}.
In the following proposition, we show that although we dropped the rank-1 constraint in (\ref{Eqn:ProblemRef1C1}) , the solution of (\ref{Eqn:AlgorithmProblem}) satisfies $\rank\{\boldsymbol{V}_{\hspace*{-2pt}n}^*\}\leq~1$.
\begin{proposition}
	The solution of (\ref{Eqn:AlgorithmProblem}) satisfies $\rank\{\boldsymbol{V}_{\hspace*{-2pt}n}^*\}\leq~1, \forall n \in \{1,2,3\}$.
\end{proposition}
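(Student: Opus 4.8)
The plan is to analyze the Karush–Kuhn–Tucker (KKT) conditions of the convex problem (\ref{Eqn:AlgorithmProblem}) and show that any optimal $\boldsymbol{V}_{\hspace*{-2pt}n}^*$ with rank exceeding one can be replaced by a rank-one matrix without changing the objective or violating the constraints. First I would write the Lagrangian of (\ref{Eqn:AlgorithmProblem}), introducing multipliers $\alpha_k \geq 0$ for the rate constraints (\ref{Eqn:ProblemC1Ref}), $\delta_k \geq 0$ for the linearized energy constraints (\ref{ProblemC2Ref2}), $\eta_{n,k} \geq 0$ for the no-saturation constraints (\ref{Eqn:SaturationConstr}), a multiplier for the equality (\ref{Eqn:ProblemC3}), and the positive semidefinite matrix multiplier $\boldsymbol{Z}_n \succeq \boldsymbol{0}$ for $\boldsymbol{V}_{\hspace*{-2pt}n} \in \mathcal{S}_+$. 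Crucially, all occurrences of $\boldsymbol{V}_{\hspace*{-2pt}n}$ in the objective and in the constraints are \emph{linear} in $\boldsymbol{V}_{\hspace*{-2pt}n}$, and moreover they enter only through the scalar terms $\Tr{\boldsymbol{V}_{\hspace*{-2pt}n}}$ and $\Tr{\boldsymbol{H}_k \boldsymbol{V}_{\hspace*{-2pt}n}} = \boldsymbol{h}_k^H \boldsymbol{V}_{\hspace*{-2pt}n} \boldsymbol{h}_k$ (after the SCA linearization, $\phi$ no longer appears as a nonlinear function of $\boldsymbol{V}_{\hspace*{-2pt}n}$). This is what makes the rank reduction possible.

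The key step is the stationarity condition with respect to $\boldsymbol{V}_{\hspace*{-2pt}n}$. Collecting terms, it reads
\begin{equation}
	\boldsymbol{Z}_n = \bar{\tau}\,\boldsymbol{I}_{N_\text{t}} - \sum_{k=1}^{2} \delta_k \bar{\bar{\tau}}\bar{\tau}\, \phi'(\boldsymbol{h}^H_k \boldsymbol{W}^{(t)}_{\hspace{-2pt}n} \boldsymbol{h}_k)\, \boldsymbol{H}_k + \sum_{k=1}^{2} \eta_{n,k}\, \boldsymbol{H}_k,
\end{equation}
so that $\boldsymbol{Z}_n = \bar{\tau}\,\boldsymbol{I}_{N_\text{t}} + \boldsymbol{h}_1 \boldsymbol{h}_1^H c_1 + \boldsymbol{h}_2 \boldsymbol{h}_2^H c_2$ for some real scalars $c_1, c_2$. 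Hence $\boldsymbol{Z}_n$ is a positive-semidefinite matrix equal to a scaled identity plus a rank-$\le 2$ perturbation supported on $\mathrm{span}\{\boldsymbol{h}_1, \boldsymbol{h}_2\}$. On the orthogonal complement of $\mathrm{span}\{\boldsymbol{h}_1,\boldsymbol{h}_2\}$, which has dimension $N_\text{t} - 2 \ge \rank\{\boldsymbol{Z}_n\}$'s deficiency only if $\bar\tau>0$, we have $\boldsymbol{Z}_n = \bar{\tau}\,\boldsymbol{I} \succ \boldsymbol{0}$, so that $\rank\{\boldsymbol{Z}_n\} \geq N_\text{t} - 2$. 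Complementary slackness $\boldsymbol{Z}_n \boldsymbol{V}_{\hspace*{-2pt}n}^* = \boldsymbol{0}$ then forces the range of $\boldsymbol{V}_{\hspace*{-2pt}n}^*$ into the null space of $\boldsymbol{Z}_n$, giving $\rank\{\boldsymbol{V}_{\hspace*{-2pt}n}^*\} \leq 2$. To sharpen this to $\rank \leq 1$, I would invoke the fact that $\boldsymbol{V}_{\hspace*{-2pt}n}^*$ affects the problem only through the two scalars $\Tr{\boldsymbol{V}_{\hspace*{-2pt}n}^*}$ and $\boldsymbol{h}_k^H \boldsymbol{V}_{\hspace*{-2pt}n}^* \boldsymbol{h}_k$ (only one $\boldsymbol{h}_k$ is relevant per $n$ if the null space of $\boldsymbol{Z}_n$ is one-dimensional; in general both) — combined with the structure of $\boldsymbol{Z}_n$, whose null space lies in $\mathrm{span}\{\boldsymbol{h}_1,\boldsymbol{h}_2\}$, and a standard purification/rank-reduction argument (as in the SDR literature, e.g. the approach used in \cite{Shanin2021c}): replace $\boldsymbol{V}_{\hspace*{-2pt}n}^*$ by a rank-one matrix matching the same linear functionals, which is feasible since these are the only quantities the constraints see.

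The main obstacle I anticipate is closing the gap between $\rank \leq 2$ (immediate from complementary slackness) and $\rank \leq 1$ (the claimed bound). The clean route is to observe that on the two-dimensional subspace $\mathrm{span}\{\boldsymbol{h}_1,\boldsymbol{h}_2\}$ the matrix $\boldsymbol{Z}_n$ restricted there must be singular for $\boldsymbol{V}_{\hspace*{-2pt}n}^* \neq \boldsymbol{0}$, and one then checks that a generic choice of multipliers makes this restriction have rank exactly one, pinning $\rank\{\boldsymbol{V}_{\hspace*{-2pt}n}^*\} \leq 1$; the degenerate case (restriction identically zero on the subspace) has to be handled separately, typically by showing one can always construct an alternative optimal solution with rank-one blocks via the purification lemma, since the number of scalar constraints that "touch" $\boldsymbol{V}_{\hspace*{-2pt}n}$ is small. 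I would also make explicit the hypothesis that we are in the non-trivial regime ($\bar\tau^* > 0$, from Proposition~3), so that the identity term $\bar\tau\,\boldsymbol{I}_{N_\text{t}}$ in $\boldsymbol{Z}_n$ is nonzero and the rank argument goes through; the boundary linearization point $\boldsymbol{W}^{(t)}_{\hspace{-2pt}n}$ is itself rank-one by induction, which keeps $\phi'(\cdot)$ well-defined throughout.
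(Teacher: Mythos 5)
Your proposal follows essentially the same route as the paper's proof: a Lagrangian/KKT analysis, the stationarity condition expressing the dual matrix associated with $\boldsymbol{V}_{\hspace*{-2pt}n}\in\mathcal{S}_+$ as the identity minus a rank-$\le 2$ perturbation supported on $\mathrm{span}\{\boldsymbol{h}_1,\boldsymbol{h}_2\}$, complementary slackness to confine the range of $\boldsymbol{V}_{\hspace*{-2pt}n}^*$ to that matrix's null space, and a genericity argument (the paper phrases it as ``with probability $1$ only one eigenvalue of the perturbation attains the maximum'') to reduce the null-space dimension from two to one before applying Sylvester's rank inequality. Your purification fallback for the degenerate case is a sensible hedge the paper does not include, but the core argument is the same.
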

\begin{proof}
	Please refer to Appendix~\ref{Appendix:ProofProposition4}.
\end{proof}
Hence, as solution of (\ref{Eqn:ProblemRef2}), we obtain $\beta_n^* = \beta_n^{(t)}$ and $\boldsymbol{w}_{n}^* = \lambda_n \hat{\boldsymbol{w}}_{n}^* $, where $\lambda_n$ and $\hat{\boldsymbol{w}}_{n}^*$ are the non-zero eigenvalue and the corresponding eigenvector of $\boldsymbol{W}_{\hspace*{-2pt}n}^* = {\boldsymbol{V}^{(t)}_{\hspace*{-2pt}n}}/{\beta^{(t)}_n}, \, \forall n$, respectively.
The proposed algorithm\footnotemark\hspace*{0pt} is summarized in Algorithm~\ref{Algorithm}.
Since the channel gains $\boldsymbol{h}_k$, the required rates $R_k^\text{req}$, and the initial energies $q_k, \forall k,$ are known to the BS, problem (\ref{Eqn:OriginalProblem}) can be solved at the BS with Algorithm~1.
\footnotetext{We note that the extension of Algorithm~1 to the multi-user case is possible.}

\begin{algorithm}[!t]	
	\small				
	\ifdefined\draftversion
	\linespread{1.5}\selectfont
	\else
	\linespread{1.0}\selectfont
	\fi	
	\SetAlgoNoLine%
	\SetKwFor{Foreach}{for each}{do}{end}		
	Initialize: Required rates $R_k^\text{req}$, initial energies $q_k, \forall k$, and error tolerances $\epsilon_\text{SCA}$, $\epsilon_{\tau}$.	\\	
	1. Set initial values $\bar{\tau} = \bar{\tau}^\text{\upshape{min}}$, $i = 1$.\\
	\Repeat{$\tau \geq \bar{\tau}^\text{\upshape{max}}$}{		
		2. Determine required harvested powers $\rho_k^{\text{req}}$ \\
		3. Randomly initialize $\boldsymbol{V}_{\hspace*{-2pt}n}^{(0)}, \forall n, \boldsymbol{\beta}^{(0)}$, set $t = 0$.\\
		\Repeat{$|h^{(t)}-h^{(t-1)}|\leq \epsilon_\text{\upshape SCA}$ }{		
			a. For given $\boldsymbol{V}_{\hspace*{-2pt}n}^{(t)}, \boldsymbol{\beta}^{(t)}$, obtain $\boldsymbol{p}^\text{u}, \boldsymbol{V}_{\hspace*{-2pt}n}^{(t+1)}, \boldsymbol{\beta}^{(t+1)}$ as the solution of (\ref{Eqn:AlgorithmProblem}) \\								
			b. Evaluate $h^{(t+1)} = \bar{{\tau}} \sum_{n=1}^{3} \Tr{\boldsymbol{V}_{\hspace*{-2pt}n}}$\\
			c. Determine $\boldsymbol{W}^{(t)} = {\boldsymbol{V}_{\hspace*{-2pt}n}^{(t)}}/{\beta_n^{(t)}}$\\
			d. Set $t = t+1$\\ 
		}
		4. Obtain $\boldsymbol{w}_{n}^* = \lambda_n \hat{\boldsymbol{w}}_{n}^*, \forall n$\\
		5. Store $\bar{\boldsymbol{W}}_{\hspace*{-2pt}i} = [\boldsymbol{w}_1^*  \boldsymbol{w}_2^* \boldsymbol{w}_{3}^*]$, $\boldsymbol{\beta}_i = \boldsymbol{\beta}^{(t-1)}$, $\boldsymbol{p}^\text{u}_i = \boldsymbol{p}^\text{u}$, $\xi_i= \bar{\tau} \sum_n \beta_n^{(t-1)} \| \boldsymbol{w}^*_n\|_2^2$\\
		6. Set $\bar{\tau} = \bar{\tau} + \epsilon_{\tau}$, $i = i+1$.
	}
	7. Find index $i^*$ yielding the minimum $\boldsymbol{\xi}$ \\
	\textbf{Output:} 
	$ \bar{\tau}^* = \bar{\tau}^\text{\upshape{min}} + (i^*-1) \epsilon_{\tau}$, $\boldsymbol{\beta}_{i^*}$, $\bar{\boldsymbol{W}}_{\hspace*{-2pt}i^*}$, $\boldsymbol{p}^\text{u}_{i^*}$
	\caption{\strut Algorithm for solving (\ref{Eqn:OriginalProblem}) }
	\label{Algorithm}
\end{algorithm}	
\setlength{\textfloatsep}{3pt}

	\section{Numerical Results}
	In this section, we evaluate the performance of the proposed resource allocation scheme via simulations.
In order to enable reliable communication, we assume that the BS and each user device have a line-of-sight link.
The noise variance is set to $\sigma^2 = \SI{-110}{\deci\belmilliwatt}$.
We compute the path losses as $\big(\frac{c_l}{4 \pi d_k f_c}\big)^2$, where $f_c =\SI{868}{\mega\hertz}$, and $c_l$ and $d_k = \SI{10}{\meter}, k \in \{1,2\},$ are the speed of light and the distance between the BS and user device $k$, respectively.
Furthermore, the channel gains $\boldsymbol{h}_k$ follow Ricean distributions with Ricean factor $1$.
For the EH model in (\ref{Eqn:ModelEH}), we adopt parameter values $\mu = 1.85, \nu = 2.2\cdot10^3, \lambda = 2.5\cdot10^{-7}, A_s^2 = 2\cdot10^{-4}$ as in \cite{Morsi2019}.
For Algorithm~1, we adopt error tolerances $\epsilon_{\text{SCA}} = 10^{-4}$ and $\epsilon_{\tau} = 0.1$ and the initial energy for each user is set to $q_k = \SI{0}{\joule}$.
All simulation results are averaged over 100 channel realizations.
\ifdefined\draftversion
\begin{figure}[!t]
	\centering
	\ifdefined\draftversion
	\includegraphics[width=0.8\linewidth, draft=false]{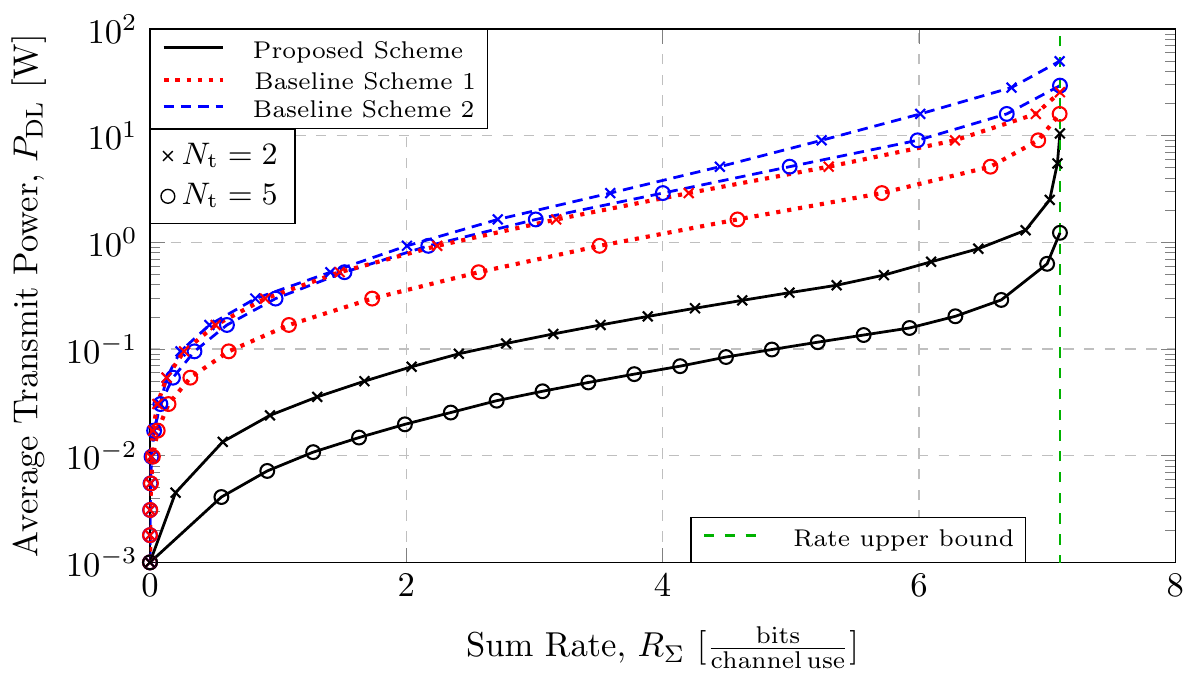}
	\else
	\includegraphics[width=0.8\linewidth, draft=false]{SimResults}
	\fi	
	\ifdefined\draftversion\else\vspace*{-10pt}\fi	
	\caption{Average transmit powers for different required rates and numbers of BS antennas.}
	\label{Fig:SimResults}
	\ifdefined\draftversion\else\vspace*{10pt}\fi	
\end{figure}
\else\fi

In Fig.~\ref{Fig:SimResults}, we show the average transmit power $P_{\text{DL}}$ as a function of sum rate $R_\Sigma$, where $R_\Sigma = \sum_k R_k^\text{req}$ and $R_1^\text{req} = R_2^\text{req}$.
As Baseline Scheme~1 and Baseline Scheme~2, we adopt the WPCN designs in \cite{Boshkovska2017a} and \cite{Liu2014}, which are based on the sigmoidal and linear EH models, respectively.
Since time fraction $\bar{{\tau}}$ and covariance matrix $\tilde{\boldsymbol{X}}$ were optimized in \cite{Liu2014} and \cite{Boshkovska2017a}, for the baseline schemes, in the downlink, we adopt $\tilde{N} = \rank \{\tilde{\boldsymbol{X}}\}$ symbol vectors with $\tau_n = {\bar{{\tau}}}/{\tilde{N}}, n\in\{1,2,\cdots, \tilde{N}\},$ and obtain these vectors from the dominant eigenvectors of $\tilde{\boldsymbol{X}}$.
First, we observe that for each considered system setup, the sum rates are bounded from above (indicated by the dashed green line) since the EH circuits are driven into saturation for high transmit power levels.
Next, we note that for given $N_\text{t}$ and $R_\Sigma$, the proposed scheme requires a  significantly lower transmit power than the baseline schemes. 
This is due to the more accurate modelling of the EH circuits which enables the optimization of the instantaneous powers harvested at the user devices.
{Finally, since a higher number of BS antennas leads to beamforming gain and channel hardening, we observe a better system performance in this case.}
	\section{Conclusions}
	We considered two-user MISO WPCNs with non-linear EH circuits at the user devices, where the downlink and uplink transmission phases were utilized for power and information transfer, respectively.
We formulated an optimization problem for the minimization of the transmit power in the downlink with per-user rate constraints in the uplink.
{We provided conditions for the feasibility of the formulated problem and the existence of a trivial solution, respectively.}
Next, for the case when the problem is feasible and the solution is non-trivial, we proved that  three beamforming vectors are sufficient for optimal power transfer in the downlink.
We obtained suboptimal solutions for these vectors via SDR and SCA.
Our simulation results revealed that the proposed WPCN design outperforms baseline schemes based on linear and sigmoidal EH models.
	
	\appendices
	\begin{appendices}
		\renewcommand{\thesection}{\Alph{section}}
		\renewcommand{\thesubsection}{\thesection.\arabic{subsection}}
		\renewcommand{\thesectiondis}[2]{\Alph{section}:}
		\renewcommand{\thesubsectiondis}{\thesection.\arabic{subsection}:}	
		\section{Proof of Proposition 1}
		\label{Appendix:ProofProposition1}
		First, we note that the optimal value $\bar{\tau}^*$ as solution of (\ref{Eqn:OriginalProblem}) can be obtained as $\bar{\tau}^* = \argmin P_\text{DL}^*(\bar{\tau})$, where
\begin{equation}
	P_\text{DL}^*(\bar{\tau}) = \min_{\mathcal{F}} \{P_{\text{DL}} \}
	\label{Eqn:ProofProblemOriginal}
\end{equation}
and $\mathcal{F} = \{\boldsymbol{\tau}, {\boldsymbol{X}}, \boldsymbol{p}^\text{u}, N:\text{(\ref{Eqn:ProblemC1})}, \text{(\ref{Eqn:ProblemC2})}, \text{(\ref{Eqn:ProblemC3})} \}$.
Next, we equivalently reformulate optimization problem in (\ref{Eqn:ProofProblemOriginal}) as follows:
\begin{subequations}
	\begin{align}
		\minimize_{\substack{\boldsymbol{\beta}, \boldsymbol{X}, N}  } \; & \bar{\tau} \sum_{n=1}^{N}  \beta_n \|\boldsymbol{x}_n\|_2^2 \label{Eqn:ProblemIntervalObj} \\
		\subjectto \; & \sum_{n=1}^{N} \beta_n \phi\big(|\boldsymbol{h}_k^H \boldsymbol{x}_n|^2\big) \geq \frac{g_k(\bar{\tau})}{\bar{\tau}}, \forall k  \label{Eqn:ProblemIntervalC1}\\
		&\sum_{n=1}^{N} \beta_n = 1, \label{Eqn:ProblemIntervalC2}
	\end{align}
	\label{Eqn:ProblemInterval}
\end{subequations}
\noindent where $\boldsymbol{\beta} = [\beta_1, \beta_2, \cdots, \beta_N]$ with $\beta_n = \frac{\tau_n}{\bar{\tau}} \geq 0, n\in\{1,2,\cdots,N\},$ and $g_k(\bar{\tau}) = (1-\bar{\tau}) \Big[2^\frac{R^{\text{req}}_k}{1-\bar{\tau}} - 1\Big]{\tilde{\sigma}^2} - \frac{q_k}{T_f}$.
We note that the minimum transmit power $P_\text{DL}^*(\cdot)$ in (\ref{Eqn:ProofProblemOriginal}) can be evaluated as the value of the objective function ({\ref{Eqn:ProblemIntervalObj}}) obtained in the optimal point of problem (\ref{Eqn:ProblemInterval}).
Here, for a given $\bar{\tau}$, function $f_k(\bar{\tau}) \triangleq \frac{g_k(\bar{\tau})}{\bar{\tau}}$ returns the minimum harvested power required at user $k$ for uplink transmission with data rate $R^{\text{req}}_k$.
Since function $\phi(\cdot)$ is bounded, problem (\ref{Eqn:ProblemInterval}) is feasible if and only if $f_k(\bar{\tau}) \leq \varphi(A_s^2), \forall k \in \{1,2\}$.
Hence, problem (\ref{Eqn:OriginalProblem}) is feasible if and only if $\exists \bar{\tau} \in [0,1): f_k(\bar{\tau}) \leq \phi(A_s^2), \, \forall k \in\{1,2\}$.
This concludes the proof.
		\section{Proof of Proposition 2}
		\label{Appendix:ProofProposition2}
		We note that the optimal value $\bar{\tau}^*$ as solution of (\ref{Eqn:OriginalProblem}) can be obtained as $\bar{\tau}^* = \argmin P_\text{DL}^*(\bar{\tau})$, where function $P_\text{DL}^*(\bar{\tau})$ is given by (\ref{Eqn:ProofProblemOriginal}).
Furthermore, optimization problem (\ref{Eqn:ProofProblemOriginal}) can be equivalently reformulated as in (\ref{Eqn:ProblemInterval}).
Let us consider function $f_k(\bar{\tau}) \triangleq \frac{g_k(\bar{\tau})}{\bar{\tau}}, k\in\{1,2\}, $ in (\ref{Eqn:ProblemInterval}).
First, we note that $f_k(\bar{\tau}) \to \infty$ for $\bar{\tau} \to 1$.
Next, if condition (\ref{Eqn:PropositionCondition}) holds for user $k$, we have $g_k(0) < 0$, and hence, $f_k(\bar{\tau}) \to -\infty$ for $\bar{\tau} \to 0$.
Furthermore, if (\ref{Eqn:PropositionCondition}) holds, it can be shown that for the derivative $f'_k(\bar{\tau})$ of function $f_k(\bar{\tau})$, we have $f'_k(\bar{\tau}) > 0, \; \forall \bar{\tau} \in [0,1), \forall k$, and therefore, function $f_k(\bar{\tau})$ is monotonic increasing, $\forall k \in \{1,2\}$.
Let us consider $\bar{\tau}_2 \geq \bar{\tau}_1$, where $\bar{\tau}_n \in [0,1), n\in\{1,2\},$ and denote the feasible sets of (\ref{Eqn:ProblemInterval}) corresponding to $\bar{\tau} = \bar{\tau}_1$ and $\bar{\tau} = \bar{\tau}_2$ by $\mathcal{F}_1$ and $\mathcal{F}_2$, respectively.
We note that if condition (\ref{Eqn:PropositionCondition}) holds $\forall k \in\{1,2\}$, then $\mathcal{F}_2 \subset \mathcal{F}_1$, and hence, $P^*_\text{DL}(\bar{\tau}_2) > P^*_\text{DL}(\bar{\tau}_1), \, \forall \bar{\tau}_1, \bar{\tau}_2 \in [0,1)$.
Thus, we conclude that function $P^*_\text{DL}(\cdot)$ is monotonic increasing, and therefore, the optimal value is $\bar{\tau}^* = 0$.
Hence, the optimal solution of (\ref{Eqn:OriginalProblem}) is trivial, i.e., $N^* = 0, \boldsymbol{x}^* = \boldsymbol{0}, \boldsymbol{\tau}^* = \boldsymbol{0}, {p}^{\text{u}^*}_k = \big[2^{R^{\text{\upshape{req}}}_k} - 1\big]{\tilde{\sigma}^2}, \forall k \in \{1,2\}$.
This concludes the proof.

		\section{Proof of Proposition 3}
		\label{Appendix:ProofProposition3}
		We note that the optimal value $\bar{\tau}^*$ as solution of (\ref{Eqn:OriginalProblem}) can be obtained as $\bar{\tau}^* = \argmin P_\text{DL}^*(\bar{\tau})$, where function $P_\text{DL}^*(\bar{\tau})$ is given by (\ref{Eqn:ProofProblemOriginal}).
Furthermore, optimization problem (\ref{Eqn:ProofProblemOriginal}) can be equivalently reformulated as in (\ref{Eqn:ProblemInterval}).
If problem (\ref{Eqn:OriginalProblem}) is feasible and condition (\ref{Eqn:PropositionCondition}) does not hold for any $k \in\{1,2\}$, in (\ref{Eqn:ProblemInterval}), we have $g_k(0) > 0$ and $f_k(\bar{\tau}) \to \infty$ for $\bar{\tau} \to 0$.
Moreover, it can be shown that for the second derivative of function $f_k(\cdot)$, $f_k''(\bar{\tau}) > 0$ holds $\forall \bar{\tau} \in [0,1)$, and hence, function $f_k(\bar{\tau} )$ is convex and reaches its minimum at $\bar{\tau} = \bar{\tau}^\text{max}_k$.
Furthermore, it can be shown that $f_k(\bar{\tau}_k^\text{max}) > 0$, $\forall k$.
Thus, for $\bar{\tau} \in [\bar{\tau}^\text{max}, 1]$, function $P_\text{DL}^*$ is monotonic increasing and $\bar{\tau}^* \in [0, \bar{\tau}^\text{max}]$.
Furthermore, if we assume that $\bar{\tau} \in [0, \bar{\tau}^\text{min})$, then $\exists k \in \{1,2\}$, such that $f_k(\bar{\tau}) > \varphi(A_s^2)$ and problem (\ref{Eqn:ProblemInterval}) is infeasible.
Hence, the optimal $\bar{\tau}^* \in [\bar{\tau}^\text{min}, \bar{\tau}^\text{max}]$.

Finally, we prove that for the solution of (\ref{Eqn:ProblemInterval}), the optimal number of transmit vectors satisfies $N^* \leq 3$.
To this end, let us consider a function $\tilde{P}_\text{DL}(\bar{\tau}, \boldsymbol{X}, N)$ defined as
\begin{equation}
	\tilde{P}_\text{DL}(\bar{\tau}, \boldsymbol{X}, N) = \min_{\boldsymbol{\beta}} \bar{\tau} \sum_{n=1}^{N}  \beta_n \|\boldsymbol{x}_n\|_2^2 \quad \subjectto \; \text{(\ref{Eqn:ProblemIntervalC1})}, \text{(\ref{Eqn:ProblemIntervalC2})}.
	\label{Eqn:ProblemIntervalRef}
\end{equation}
For any given $\bar{\tau} \in [\bar{\tau}^\text{min}, \bar{\tau}^\text{max}]$, the optimal transmit power in (\ref{Eqn:ProofProblemOriginal}) can be obtained as ${P}^*_\text{DL}(\bar{\tau}) = \min_{ \{ \boldsymbol{X}, N \} } \tilde{P}_\text{DL}(\bar{\tau}, \boldsymbol{X}, N)$.
Optimization problem (\ref{Eqn:ProblemIntervalRef}) is linear in $\boldsymbol{\beta}$ and involves $N$ variables and $3$ constraints.
We note that the solution of a linear optimization problem with $\tilde{N}$ variables and $\tilde{K} \leq \tilde{N}$ constraints is a vertex of the polytope defined by $\tilde{K}$ constraints, and thus, has at most $\tilde{K}$ non-zero elements \cite{Nocedal2006}.
Thus, for the optimal solution of (\ref{Eqn:ProblemIntervalRef}) and, hence, (\ref{Eqn:ProblemInterval}) and (\ref{Eqn:OriginalProblem}), at most $N^* = 3$ time slots have non-zero lengths. 
This concludes the proof.

		\section{Proof of Proposition 4}
		\label{Appendix:ProofProposition4}
		First, we rewrite convex problem (\ref{Eqn:AlgorithmProblem}) equivalently as follows:
\begin{subequations}
	\begin{align}
		\minimize_{\substack{\boldsymbol{V}_{\hspace*{-2pt}1}, \boldsymbol{V}_{\hspace*{-2pt}2}, \boldsymbol{V}_{\hspace*{-2pt}3}, \boldsymbol{\beta}}}  \quad& \sum_{n=1}^{3} \Tr{\boldsymbol{V}_{\hspace*{-2pt}n}}  \\
		\subjectto \quad &\bar{\bar{\tau}} \bar{\tau} \sum_{n=1}^{3} \big[ \phi'(\boldsymbol{h}^H_k \boldsymbol{W}^{(t)}_{\hspace{-2pt}n} \boldsymbol{h}_k) \Tr{\boldsymbol{H}_k \boldsymbol{V}_{\hspace*{-2pt}n}} + \beta_n \hat{\psi}(\boldsymbol{W}^{(t)}_{\hspace{-2pt}n})\big] + \bar{q}_k/T_\text{f} \geq \rho_k^{\text{req}}, \; \forall k, \label{Eqn:RankProofProblemC1} \\
		&\Tr{\boldsymbol{H}_k \boldsymbol{V}_{\hspace*{-2pt}n}} - \beta_n A_s^2 \leq 0, \; \forall n, k, \label{Eqn:RankProofProblemC2} \\
		&\sum_{n=1}^{3} \beta_n = 1 \\
		& \boldsymbol{V}_{\hspace*{-2pt}n} \in \mathcal{S}_{+}, \; \forall n. \label{Eqn:RankProofProblemC4}
	\end{align}
	\label{Eqn:RankProofProblem}
\end{subequations}
\noindent\hspace*{-4pt}Since (\ref{Eqn:RankProofProblem}) is feasible and convex, strong duality holds and the gap between (\ref{Eqn:RankProofProblem}) and its dual problem is equal to zero \cite{Boyd2004}.
We express the Lagrangian of (\ref{Eqn:RankProofProblem}) with respect to $\boldsymbol{V}_n$ as follows:
\begin{align}
	\mathcal{L}(\boldsymbol{V}_{\!\!n}) =  \Tr{\boldsymbol{V}_{\!\!n}} - \sum_k (\mu_k \bar{\psi}_k - \lambda_k) \Tr{\boldsymbol{H}_k \boldsymbol{V}_{\!\!n}}  - \Tr{\boldsymbol{Y}_{\!\!n}\boldsymbol{V}_{\!\!n}} + {\gamma},
	\label{Eqn:RankProofLagrangian}
\end{align}
\noindent where ${\mu_k}$ and ${\lambda_k}$, $k \in \{1,2\}$, are Lagrangian multipliers associated with constraints (\ref{Eqn:RankProofProblemC1}) and (\ref{Eqn:RankProofProblemC2}), respectively, and $\bar{\psi}_k$ collects all terms that do not depend on $\boldsymbol{V}_{\!\!n}$.
{Here, $\boldsymbol{Y}_{\!\!n}$ is the Lagrangian multiplier associated with the constraint on $\boldsymbol{V}_{\!\!n}$ in (\ref{Eqn:RankProofProblemC4})} and $\gamma$ accounts for all terms that do not involve $\boldsymbol{V}_{\!\!n}$.
We note that the Karush-Kuhn-Tucker (KKT) conditions are satisfied for the optimal solution of (\ref{Eqn:RankProofProblem}) with respect to $\boldsymbol{V}_{\!\!n}$, denoted by $\boldsymbol{V}_{\!\!n}^*$, and the solutions ${\mu}_k^*, {\lambda}_k^*, k\in\{1,2\}$, and $\boldsymbol{Y}_{\!\!n}^*$ of the corresponding dual problem.
The KKT conditions are given by
\begin{subequations}
	\begin{align}
		&\triangledown \mathcal{L}(\boldsymbol{V}_{\!\!n}^*) = \boldsymbol{0}_{N_\text{t} \times N_\text{t}}
		\label{Eqn:RankProofKKTGradient}\\
		& \mu_k^* \geq 0, \lambda_k^* \geq 0, \boldsymbol{Y}_{\!\!n}^* \succeq 0, \; \forall k \in \{1,2\}
		\label{Eqn:RankProofKKTPositive}\\
		& \boldsymbol{Y}_{\!\!n}^* \boldsymbol{V}_{\!\!n}^* = \boldsymbol{0}_{N_\text{t} \times N_\text{t}},
		\label{Eqn:RankProofKKTSemidef}
	\end{align}
	\label{Eqn:RankProofKKT}
\end{subequations}
\noindent\hspace*{-4pt}where $\triangledown \mathcal{L}(\boldsymbol{V}_{\!\!n}^*)$ and $\boldsymbol{0}_{N_\text{t} \times N_\text{t}}$ denote the gradient of $\mathcal{L}(\boldsymbol{V}_{\!\!n}^*)$ evaluated in $\boldsymbol{V}_{\!\!n}^*$ and the square all-zero matrix of size $N_\text{t}$, respectively.
Next, we express condition (\ref{Eqn:RankProofKKTGradient}) as follows
\begin{equation}
	\boldsymbol{Y}_{\!\!n}^* = \boldsymbol{I}_{N_\text{t}} - \boldsymbol{\Delta}, \label{Eqn:RankProofMainEquality} 
\end{equation}
\noindent where $\boldsymbol{\Delta} = \sum_k (\mu_k^* \bar{\psi}_k - \lambda_k^*) \boldsymbol{H}_k^\top$.
Let us now investigate the structure of $\boldsymbol{\Delta}$.
We denote the maximum eigenvalue of $\boldsymbol{\Delta}$ by $\delta^\text{max} \in \mathbb{R}$.
Due to the randomness of the channel, with probability $1$, only one eigenvalue of $\boldsymbol{\Delta}$ has value $\delta^\text{max}$.
Observing (\ref{Eqn:RankProofMainEquality}), we note that if $\delta^\text{max}<1$, then $\boldsymbol{Y}_{\!\!n}^*$ is a positive semidefinite matrix with full rank.
In this case, (\ref{Eqn:RankProofKKTSemidef}) yields $\boldsymbol{V}_{\!\!n}^* = \boldsymbol{0}_{N_\text{t} \times N_\text{t}}$ and, hence, $\rank \boldsymbol{V}_{\!\!n}^* = 0$.
Furthermore, if $\delta^\text{max}>1$, then $\boldsymbol{Y}_{\!\!n}^*$ is not a positive semidefinite matrix, which contradicts (\ref{Eqn:RankProofKKTPositive}).
Finally, if $\delta^\text{max} = 1$, then $\boldsymbol{Y}_{\!\!n}^*$ is a positive semidefinite matrix with $\rank \{\boldsymbol{Y}_{\!\!n}^*\} = N_\text{t} - 1$.
Then, applying Sylvester's rank inequality to (\ref{Eqn:RankProofKKTSemidef}), we have
\begin{align}
	0 = \rank \{\boldsymbol{Y}_{\!\!n}^* \boldsymbol{V}_{\!\!n}^*\} \geq \rank \{\boldsymbol{Y}_{\!\!n}^*\} &+ \rank \{\boldsymbol{V}_{\!\!n}^*\} - N_\text{t} \nonumber \\ &= \rank \{\boldsymbol{V}_{\!\!n}^*\} - 1.
\end{align}
Thus, we conclude that $\rank \{\boldsymbol{V}_{\!\!n}^*\} \leq 1$, $\forall n \in\{1,2,3\}$. This concludes the proof.

	\end{appendices}

	\bibliographystyle{IEEEtran}
	\bibliography{Final}

\end{document}